\documentclass[aps,pra,superscriptaddress,twocolumn]{revtex4-1}
\usepackage[utf8,latin1]{inputenc}
\usepackage[T1]{fontenc}     %Output what you want e.g., é, ?, a, ü
\usepackage[british]{babel}
\usepackage[dvipsnames]{xcolor}
\usepackage[colorlinks=true,citecolor=Green,linkcolor=Red,urlcolor=Cyan,hyperindex]{hyperref}
\usepackage{graphicx}
\usepackage{epsfig}
\usepackage{color}
\usepackage{centernot}
\usepackage{lmodern}
\usepackage{mathpazo}
\usepackage[babel=true]{microtype}
\usepackage{amsmath,amssymb,amsthm}
\usepackage{mathtools}
\usepackage{bm}
\usepackage{enumerate}
\usepackage{sidecap}
\usepackage{enumitem}
\usepackage{listings}

\setlength{\unitlength}{1cm}

% COMMANDS

\newcommand{\ket}[1]{\vert#1\rangle}
\newcommand{\bra}[1]{\langle#1\vert}

\newcommand{\1}{\mathbb{1}}

\newcommand{\ie}{\textit{i.e.}}

\newtheorem{theorem}{Theorem}

\newtheorem{coro}[theorem]{Corollary}
\newtheorem{prop}[theorem]{Proposition}
\newtheorem{example}{Example}
\newtheorem{conjecture}{Conjecture}
\newtheorem{definition}{Definition}

\newcommand{\cH}{\mathcal{H}}

\newcommand{\bM}{\textbf{M}}

\newcommand{\RR}{\mathbb{R}}
\newcommand{\PP}{\mathbb{P}}

\newcommand{\bA}{\textbf{A}}

\newcommand{\bQ}{\textbf{Q}}
\newcommand{\cA}{\mathcal{A}}
\newcommand{\cL}{\mathcal{L}}
\newcommand{\cB}{\mathcal{B}}
\newcommand{\cC}{\mathcal{C}}
\newcommand{\bB}{\textbf{B}}
\newcommand{\bR}{\textbf{R}}
\newcommand{\bC}{\textbf{C}}
\newcommand{\cR}{\mathcal{R}}

\newcommand{\bp}{\textbf{p}}

\newcommand{\beq}{\begin{equation}}
\newcommand{\eeq}{\end{equation}}

%\allowdisplaybreaks

%%%%%%%%%%%%%%%%%%%%%%%%%%

\begin{document}

\title{Joint measurability meets Birkhoff-von Neumann's theorem}
\author{Leonardo Guerini}
\email[]{guerini.leonardo@ictp-saifr.org}
\affiliation{{International Centre for Theoretical Physics - South American Institute for Fundamental Research \& Instituto de F\'isica Te\'orica - UNESP, R. Dr. Bento Teobaldo Ferraz 271, S\~ao Paulo, Brazil}}

\author{Alexandre Baraviera}
\email[]{baravi@mat.ufrgs.br}
\affiliation{{Instituto de Matem\'atica e Estat\'istica - Univerisdade Federal do Rio Grande do Sul, Av. Bento Gon\c{c}alves 9500, Porto Alegre, Brazil}}

\begin{abstract}
Quantum measurements can be interpreted as a generalisation of probability vectors, in which non-negative real numbers are replaced by positive semi-definite operators.
We extrapolate this analogy to define a generalisation of doubly stochastic matrices that we call doubly normalised tensors (DNTs), and formulate a corresponding version of Birkhoff-von Neumann's theorem, which states that permutations are the extremal points of the set of doubly stochastic matrices.
We prove that joint measurability arises naturally as a mathematical feature of DNTs in this context, needed to establish a characterisation similar to Birkhoff-von Neumann's.
Conversely, we also show that DNTs appear in a particular instance of a joint measurability problem, remarking the relevance of this property in general operator theory.

\end{abstract}

\maketitle

\section{Introduction}

Quantum theory is inherently probabilistic, in the sense that the result of a measurement on a quantum system cannot be predicted deterministically; we rather have to cope with a probability distribution over the set of possible outcomes \cite{nielsenchuang}.
A quantum system is  described by a Hilbert space and its states are given by density matrices, which are positive semi-definite operators of unit trace.
Due to this positivity and normalisation features, the density matrix can be interpreted as the quantum analogue of a probability vector, from which we extract information about the system.

On the other hand, quantum measurements are described by a collection of positive semi-definite operators that sum up to the identity.
Hence, the analogy between probability vectors and quantum measurements is even more direct, given the natural connections between non-negative real numbers and positive semi-definite Hermitian operators together with the association of 1 to the identity operator. %compose an even more direct .
Namely, the latter are an operator-version of the former, obtained by increasing the dimension of the entries.

In this work we explore this parallel, investigating the correspondent of standard features of probability vectors and their implications in terms of quantum theory.
More specifically, we aim at doubly stochastic matrices, matrices such that each column and row is a probability vector.
The set of such objects is convex, and an important characterisation of it is given by Birkhoff-von Neumann's (BvN's) theorem \cite{birkhoff}, which states that its extremal points are the permutation matrices.
Consequently, every doubly stochastic matrix is a convex combination of permutation matrices.

We introduce an operator-version of doubly stochastic matrices where probabilities are substituted by quantum measurements, which we call doubly normalised tensors.
Following BvN's theorem, our goal is to study the extremal points of such a set.
%From this starting point, w
We extend the analogy between real numbers and operators to convex combinations, and find that a necessary condition for achieving a decomposition in terms of permutations %is that the row- and column-measurements must be jointly measurable.
given in terms of joint measurability \cite{heinosaari2016}, a property o measurements that plays a central role in many quantum information topics, such as Bell nonlocality \cite{brunner2014} and uncertainty relations \cite{busch2014}.
%Joint measurability is a property of measurements that refers to simultaneity: a set of measurements is jointly measurable if it can be understood as a single measurement, from which each element of the set can be recovered as a marginal \cite{heinosaari2016}.
%One of the most striking features of quantum theory is that most measurements are not jointly measurable, and t
%This property plays a central role in many quantum information topics, such as Bell nonlocality \cite{brunner2014} and uncertainty relations \cite{heisenberg}.
%Considering its strongly operational motivation, it is interesting to see this property arising here from a purely mathematical context.
Nevertheless, we also show that not all doubly normalised tensors present this property, %have jointly measurable rows and columns, 
and hence a complete description of this set and its relations to permutations remains open.
We finish by presenting yet another connection to joint measurability: Starting from quantum theory and studying the plurality of joint measurements, we see the emergence of our generalised BvN's theorem arising from this context.

\section{Preliminaries}

A probability vector of $n$ components $\bp$ is given by
\begin{equation}\label{lasagna}
\bp=(p_1,\ldots,p_n)\in\RR^n;\ p_i\geq0,\ \sum_ip_i = 1.
\end{equation}
We denote the set of $n$-component probability vectors by $\mathbb{S}_n$.
An $n\times n$ doubly stochastic matrix is a matrix $D\in\RR^{n\times n}$ such that each column and each row is a probability vector,
\begin{subequations}\label{pizza}
\begin{align}
 &D=
 \begin{bmatrix}
 p_{11} & \ldots & p_{1n}\\
 \vdots & \ddots & \vdots\\
 p_{n1} & \ldots & p_{nn}
 \end{bmatrix};\\
\textbf{c}^1&=(p_{i1})_i,\ldots,\textbf{c}^n=(p_{in})_i\in\mathbb{S}_n, \\
\textbf{r}^1&=(p_{1j})_j,\ldots,\textbf{r}^n=(p_{nj})_j\in\mathbb{S}_n.
 \end{align}
\end{subequations}
The set of doubly stochastic matrices is convex.
Among the most important results on this topic lies Birkhoff-von Neumann's theorem \cite{birkhoff}, which states that the extremal points of this set are the permutation matrices.
This implies that every doubly stochastic matrix $D$ can be written as a convex combination of permutation matrices $\Pi_l$,
\begin{equation}\label{gnocchi}
D = \sum_{l=1}^{n!}q_l\Pi_l,
\end{equation}
where $\textbf{q}=(q_l)_l\in\mathbb{S}_{n!}$.
Although this decomposition involves in principle $n!$ terms,
it was shown in Ref. \cite{marcus1959} that only $(n-1)^2+1$ terms are sufficient.

In analogy to (\ref{lasagna}), a quantum measurement of $n$ outcomes acting on a Hilbert space $\cH$ is modelled by a positive-operator valued measure (POVM) $\bA$, described by
\begin{equation}
 \bA = (A_1,\ldots,A_n)\in\cL(\cH)^n;\ A_i\geq0,\ \sum_iA_i = I,
 \end{equation}
where $\cL(\cH)$ is the space of linear operators acting in $\cH$, $\geq$ is the partial order that define positive semi-definite operators and $I$ is the identity operator.
Hence, a POVM is an operator-version of a probability vector, obtained by enlarging the dimension of the entries.
We denote the set of $n$-outcome quantum measurements on $\cH$ by $\PP_n$.

We can consider now the main object of this work.
\begin{definition}
A \emph{doubly normalised tensor of positive semi-definite operators} (DNT, for short) is a tensor $\cA\in\cL(\cH)^{n\times n}$ which, in analogy to (\ref{pizza}), each element is positive semi-definite, and each column and each row sums up to the identity,
\begin{align}
 &\cA=
 \begin{bmatrix}
 A_{11} & \ldots & A_{1n}\\
 \vdots & \ddots & \vdots\\
 A_{n1} & \ldots & A_{nn}
 \end{bmatrix}.
 \end{align}
This is to say that each row $\bR^{(i)}$ and each column $\bC^{(j)}$ of $\cA$ is a quantum measurement,
\begin{subequations}\label{parmeggiana}
\begin{eqnarray}
 \bC^1=&(A_{i1})_i,\ldots,\bC^n=(A_{in})_i\in\PP_n, \\
 \bR^1=&(A_{1j})_j,\ldots,\bR^n=(A_{nj})_j\in\PP_n.
\end{eqnarray}
\end{subequations}
\end{definition}
Notice that we can write
\begin{equation}
\cA = \sum_{i,j=1}^n E_{i,j}\otimes A_{ij},
\end{equation}
where $E_{i,j} = \ket{i}\bra{j}$ is the $n\times n$ matrix with entries $e_{ab} = \delta_{a, i}\delta_{b,j}$, that belongs to the canonical basis of $\RR^{n\times n}$.
A DNT can be taken as an operator-version of doubly stochastic matrices, and will be denoted by $\cA=[A_{ij}]$.

\section{An operator-version of Birkhoff-von Neumann's theorem}

In view of Birkhoff-von Neumann's theorem and the decomposition (\ref{gnocchi}), we turn our attention to the extremal points of the set of DNTs.
Associating a DNT $\cA$ to the tuple of its row-measurements, it is known that this tuple is extremal if and only if each measurement is extremal \cite{guerini2018}.
Thus, applying the same argument to the columns, $\cA$ is extremal if and only each row and column is an extremal POVM.
But since each row $\bR^{(i)}$ and column $\bC^{(j)}$ share a common element (namely, $A_{ij}$), there are more correlations in $\cA$ that should allow for a more refined characterisation of its extremality.

A first (naive) conjecture would be that the extremal points of the set of DNTs are the operator-versions of permutations, such as
\begin{equation}\label{radiatori}
\begin{bmatrix}
 0 & 1\\
 1 & 0
 \end{bmatrix}
 \mapsto
 \begin{bmatrix}
 0 & I\\
 I & 0
 \end{bmatrix}.
\end{equation}
Nevertheless, convex combinations of DNTs like the one in the right-hand side above yield DNTs where all entries are proportional to $I$.
Clearly this does not recover the whole set, as for any operator $0\leq A\leq I$ we can construct the DNT
\begin{equation}
\begin{bmatrix}\label{tagliatelli}
 A & I-A\\
 I-A & A
 \end{bmatrix}.
\end{equation}

A second attempt is to extend the correspondence to operators also to convex combinations.
In other words, we can consider combinations of permutation matrices in which the convex weights are associated to operators that form a quantum measurement, attached to each term via tensor product.
We formalise this idea in the following way:
\begin{definition}
%    \begin{conjecture}\label{linguini}
%      If $\cA$ is an $n\times n$ DNT, then it can be decomposed as
%      \begin{equation}\label{panini}
%       \cA = \sum_{l=1}^{n!}\Pi_l\otimes Q_l,
%       \end{equation}
%        where $\bQ=(Q_l)\in\PP_{n!}$ is a quantum measurement.
%     \end{conjecture}
Consider  a set of operators $\bQ=(Q_l)\in\PP_{n!}$ that form a quantum measurement. We call
\begin{equation}\label{panini}
 \sum_{l=1}^{n!}\Pi_l\otimes Q_l,
\end{equation}
 a \emph{decomposition into permutation tensors}.
\end{definition}
 %  One can then ask wheter a DNT can be decomposed into permutation tensors.

%     It is simple to see that for $n=2$ the answer to this question  is affirmative.
%     In this case, Eq. (\ref{tagliatelli}) represents the most general DNT, and we can rewrite it as
%     \begin{align}
%         \begin{bmatrix}
%             A & I-A\\
%            I-A & A
%          \end{bmatrix} = \begin{bmatrix}
%              1 & 0 \\ 0 & 1
%          \end{bmatrix}\otimes A + \begin{bmatrix}
%              0 & 1 \\ 1 & 0
%           \end{bmatrix} \otimes (I-A),
%                 \end{align}
%       where we see the elements of the coefficient measurement $\bQ=(A, I-A)$ weighting all the 2-dimensional permutation matrices.

We prove now that every combination of permutation tensors of the form (\ref{panini}) is a DNT.
%    This ensures that Conjecture \ref{linguini} is at least reasonable.

\begin{prop}\label{fusilli}
If $\cB = \sum_{l=1}^{n!}\Pi_l\otimes Q_l$, where $\bQ=(Q_l)$ is a POVM and $\{\Pi_l\}$ is the set of $n$-dimensional permutation matrices, then $\cB$ is a DNT.
\end{prop}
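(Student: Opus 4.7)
The plan is to unpack the tensor expression entry-by-entry and exploit two facts in sequence: (i) each $Q_l$ is positive semi-definite, and (ii) each row and each column of any permutation matrix $\Pi_l$ contains exactly one $1$ and otherwise $0$'s.

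First I would write $\Pi_l = \sum_{i,j=1}^n (\Pi_l)_{ij}\, E_{i,j}$ with $(\Pi_l)_{ij}\in\{0,1\}$, so that
\begin{equation}
\cB = \sum_{l=1}^{n!} \Pi_l\otimes Q_l = \sum_{i,j=1}^n E_{i,j}\otimes B_{ij},
\qquad B_{ij}:=\sum_{l=1}^{n!}(\Pi_l)_{ij}\, Q_l.
\end{equation}
This identifies the $(i,j)$-entry of $\cB$ as $B_{ij}$, a sum (with $0/1$ coefficients) of the operators $Q_l$. Positivity of each $B_{ij}$ then follows immediately from $Q_l\geq 0$.

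Next I would verify the row and column normalisation. Fixing a row index $i$ and summing,
\begin{equation}
\sum_{j=1}^n B_{ij} = \sum_{l=1}^{n!} \Bigl(\sum_{j=1}^n (\Pi_l)_{ij}\Bigr) Q_l = \sum_{l=1}^{n!} Q_l = I,
\end{equation}
where I used that every row of $\Pi_l$ sums to $1$ and that $\bQ$ is a POVM. The same computation with the roles of $i$ and $j$ swapped, using that every column of $\Pi_l$ also sums to $1$, gives $\sum_i B_{ij}=I$ for every $j$. Together with positivity this is exactly the definition of a DNT.

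There is no genuine obstacle here; the statement is essentially a bookkeeping exercise. The only point worth highlighting is the conceptual one: both the row-sum and column-sum identities ultimately reduce to the \emph{same} property of $\bQ$ (being a POVM), with the double-stochasticity of the permutation matrices doing the work of routing that single normalisation condition into both directions of the tensor.
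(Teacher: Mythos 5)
Your proof is correct and follows essentially the same route as the paper's: you expand each $\Pi_l$ in the basis $E_{i,j}$, identify the entry $B_{ij}=\sum_l(\Pi_l)_{ij}Q_l$ (the paper writes this as $\sum_{l:i=\Pi_l(j)}Q_l$, which is the same sum), and deduce positivity from $Q_l\geq 0$ and the normalisations from the row/column sums of permutation matrices together with $\sum_l Q_l=I$. There is nothing to add.
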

\begin{proof}
Notice that the permutation $\Pi_l$ acting on the canonical basis can be written as\cite{stupidfootnote}
\begin{equation}\label{brie}
\Pi_l =\sum_i E_{\Pi_l(i),i}.
\end{equation}
We have
\begin{subequations}\begin{align}
\cB &= \sum_l\left(\sum_i E_{\Pi_l(i),i}\right)\otimes Q_l \\
&= \sum_{ab} E_{ab} \otimes \sum_{l:a=\Pi_l(b)} Q_l.
\end{align}\end{subequations}
Thus, defining
\beq\label{rondelli}
B_{ab}:=\sum_{l:a=\Pi_l(b)} Q_l,
\eeq
it satisfies $\cB=[B_{ab}]$ and $\ B_{ab}\geq 0$.
Also, for all $a,b$ we have
\beq
\sum_b B_{ab} = \sum_a B_{ab}= \sum_l Q_l = I,
\eeq
hence each row and column of $\cB$ is in $\PP_n$.
\end{proof}

The question left is whether tensors $\cB$ that admit a decomposition into permutation tensors are the only possible DNTs.
To address this question we need to introduce the well-known notion of \textit{joint measurability}~\cite{heinosaari2016}.

\begin{definition}
A set of $m$ quantum measurements $\{\bA^{(1)},\ldots,\bA^{(m)}\}\subset\PP_n$ is said to be \emph{jointly measurable} if there exists a so-called \textit{mother measurement} ${\bM}$, from which we can recover each measurement of the set by post-processing it, \ie,
\begin{equation}
A^{(i)}_j = \sum_k \mu(j|\bA^{(i)},k){M}_k.
\end{equation}
where
$\mu$ is a probability distribution conditioned on the measurement $\bA^{(i)}$ we wish to obtain and on the operator $M_k$ of ${\bM}$, and therefore satisfies $\mu(j|\bA^{(i)},k)\geq0$ and $\sum_j\mu(j|\bA^{(i)},k)=1,\ \forall i,k$.
\end{definition}
\noindent This expresses the fact that for any given quantum system, one can determine an outcome for each $\bA^{(i)}$ by performing ${\bM}$ and, depending on the outcome $k$ obtained, flip a coin $\mu(\cdot|\bA^{(i)},k)$.
Importantly, the joint measurability of a finite set of $d$-dimensional measurements can be computationally decided in an efficient way by means of semi-definite programming (SDP) \cite{wolf2009}.

We can now present the following proposition, which shows that any combination of permutation tensors has jointly measurable rows and columns.

\begin{prop}\label{capeletti}
Let $\cB =\sum_{l=1}^{n!}\Pi_l\otimes Q_l$ be a combination of permutation tensors.
Then $\cB$ is a DNT, and the set $\{\bR^{(1)},\ldots,\bR^{(n)},\bC^{(1)},\ldots,\bC^{(n)}\}$ of all row- and column-measurements is jointly measurable.
\end{prop}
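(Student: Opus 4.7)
The first claim, that $\cB$ is a DNT, is already Proposition \ref{fusilli}, so the work lies in exhibiting a mother measurement and suitable post-processings for the $2n$ measurements $\bR^{(1)},\ldots,\bR^{(n)},\bC^{(1)},\ldots,\bC^{(n)}$. My plan is to take the POVM $\bQ=(Q_l)_{l=1,\ldots,n!}$ appearing in the decomposition itself as the mother measurement $\bM$, with the full index set $\{1,\ldots,n!\}$ as outcome set, and to show that each row- and column-measurement is recovered from $\bQ$ by a \emph{deterministic} post-processing dictated by the permutations $\Pi_l$.

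The concrete post-processings are read off equation (\ref{rondelli}). Recall that $B_{ab}=\sum_{l:\,\Pi_l(b)=a}Q_l$. For the row measurement $\bR^{(i)}=(B_{i1},\ldots,B_{in})$, given outcome $l$ from $\bQ$ I would declare the outcome $j=\Pi_l^{-1}(i)$, i.e.\ define
\begin{equation}
\mu(j\,|\,\bR^{(i)},l)=\delta_{j,\Pi_l^{-1}(i)}=[\Pi_l(j)=i].
\end{equation}
For the column measurement $\bC^{(j)}=(B_{1j},\ldots,B_{nj})$, given outcome $l$ from $\bQ$ I would instead output $i=\Pi_l(j)$, setting
\begin{equation}
\mu(i\,|\,\bC^{(j)},l)=\delta_{i,\Pi_l(j)}=[\Pi_l(j)=i].
\end{equation}
Both are non-negative; the normalisations $\sum_j\mu(j\,|\,\bR^{(i)},l)=1$ and $\sum_i\mu(i\,|\,\bC^{(j)},l)=1$ hold because each $\Pi_l$ is a bijection, so for fixed $l,i$ there is exactly one $j$ with $\Pi_l(j)=i$, and for fixed $l,j$ exactly one $i$ with $\Pi_l(j)=i$.

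Finally, I would verify that these post-processings actually reproduce the operators, which is immediate:
\begin{equation}
\sum_l\mu(j\,|\,\bR^{(i)},l)\,Q_l=\sum_{l:\,\Pi_l(j)=i}Q_l=B_{ij},
\end{equation}
matching (\ref{rondelli}), and likewise $\sum_l\mu(i\,|\,\bC^{(j)},l)\,Q_l=B_{ij}$. Hence $\bQ$ is a common mother measurement for all $2n$ row- and column-measurements simultaneously, establishing joint measurability. I do not expect any real obstacle here: the only point that might require care is checking that the \emph{same} $\bQ$ serves both rows and columns, but this is transparent once the post-processings are written in the symmetric form $[\Pi_l(j)=i]$, which is nothing but the indicator of the $(i,j)$ entry of $\Pi_l$ and so treats rows and columns on equal footing.
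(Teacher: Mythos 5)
Your proposal is correct and follows essentially the same route as the paper: it uses the coefficient POVM $\bQ$ itself as the common mother measurement and the deterministic post-processings $\mu(j\,|\,\bR^{(i)},l)=\mu(i\,|\,\bC^{(j)},l)=\delta_{i,\Pi_l(j)}$, exactly as in the paper's proof. The only difference is that you spell out the normalisation check via the bijectivity of each $\Pi_l$, which the paper leaves implicit.
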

\begin{proof}
$\cB$ is a DNT for Proposition \ref{fusilli}, hence we need only to prove that its rows/columns are both jointly measurable.

Defining the operators $B_{ab}$ as in (\ref{rondelli}), the row-measurements are given by $\bR^{(i)}=(B_{ij})_j$.
Then the coefficient measurement $\bQ$ is also a mother measurement for $\{\bR^{(i)}\}$, since for any $i,j$
\begin{equation}\label{farfalle}
B_{ij} = \sum_{l:i=\Pi_l(j)}Q_l = \sum_l{\delta_{i,\Pi_l(j)}Q_l},
\end{equation}
so we can recover the $j$-th element of the row-measurement $\bR^{(i)}$ by post-processing $\bQ$ with $\mu(j|\bR^{(i)},l) = \delta_{i,\Pi_l(j)}$.
Therefore the rows of $\cB$ are jointly measurable.

Extending the post-processing function for the columns as $\mu(i|\bC^{(j)},l)=\delta_{i,\Pi_l(j)}$, we can interpret (\ref{farfalle}) as obtaining the $i$-th element of $\bC^{(j)}$, and hence $\bQ$ is also a mother for the jointly measurable columns of $\cB$.
\end{proof}

Proposition \ref{capeletti} establishes that a necessary condition for a DNT to possess a decomposition into permutation matrices is that the row- and column-measurements are jointly measurable.
Crucially, in the proof we see that both the rows and the columns of the DNT not only can be obtained from a single measurement, but also admit symmetric post-processing functions, in the sense that $\mu(j|\bR^{(i)},l) = \mu(i|\bC^{(j)},l)$.
We now show that these conditions are also sufficient for ensuring such decomposition.

\begin{prop}\label{mozzarella}
Let $\cA=[A_{ij}]$ be an $n\times n$ DNT and $\cR = \{\bR^{(i)}=(A_{ij})_j\}_i,\ \cC = \{\bC^{(j)}=(A_{ij})_i\}_j$ its sets of row- and column-measurements, satisfying
\begin{itemize}
\item[(i)] $\cR \cup \cC$ is jointly measurable; and
\item[(ii)] the post-processing map is symmetric, $\mu(j|\bR^{(i)},k) = \mu(i|\bC^{(j)},k)$, for all $i,j,k$.% that is,
\end{itemize}
Then there exists a coefficient-measurement $\bQ\in\PP_{n!}$ such that
\begin{equation}\label{tortellini}
\cA = \sum_l \Pi_l\otimes Q_l.
\end{equation}
\end{prop}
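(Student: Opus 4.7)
My plan is to use the joint measurability hypothesis to reduce the problem to the classical Birkhoff--von Neumann theorem applied fiberwise in the mother-measurement index. By (i), fix a mother measurement $\bM=(M_k)_k$ and post-processings $\mu(\cdot|\bR^{(i)},k)$ and $\mu(\cdot|\bC^{(j)},k)$ recovering all row- and column-measurements; by (ii) these coincide, so define
\begin{equation}
\nu(i,j|k):=\mu(j|\bR^{(i)},k)=\mu(i|\bC^{(j)},k),
\end{equation}
which is non-negative. Then
\begin{equation}
A_{ij}=\sum_k \nu(i,j|k)\,M_k.
\end{equation}

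The first key step is to observe that, for each fixed $k$, the $n\times n$ matrix $N_k:=[\nu(i,j|k)]_{ij}$ is doubly stochastic. Its entries are non-negative by construction; fixing $i$ and summing over $j$ gives $1$ because $\mu(\cdot|\bR^{(i)},k)$ is a probability distribution in its first argument, and fixing $j$ and summing over $i$ gives $1$ for the same reason applied to $\mu(\cdot|\bC^{(j)},k)$, which is allowed precisely thanks to the symmetry (ii). Without (ii) one would have two unrelated Markov kernels and no way to force both marginals to normalise on the same matrix; this is why condition (ii) is essential, and it is the only mildly subtle point in the argument.

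The second step is to apply the classical Birkhoff--von Neumann theorem to each $N_k$: there exists a probability vector $(q_{l|k})_{l=1}^{n!}$ such that
\begin{equation}
N_k=\sum_{l=1}^{n!}q_{l|k}\,\Pi_l,\qquad\text{i.e.}\qquad \nu(i,j|k)=\sum_l q_{l|k}\,\delta_{i,\Pi_l(j)}.
\end{equation}
Substituting into the expression for $A_{ij}$ and exchanging sums yields
\begin{equation}
A_{ij}=\sum_l \delta_{i,\Pi_l(j)}\,Q_l,\qquad Q_l:=\sum_k q_{l|k}\,M_k.
\end{equation}

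It remains to verify that $\bQ=(Q_l)_l$ is a POVM and that the resulting tensor equals $\sum_l\Pi_l\otimes Q_l$. Positivity of each $Q_l$ is immediate since $q_{l|k}\geq 0$ and $M_k\geq 0$, and $\sum_l Q_l=\sum_k\bigl(\sum_l q_{l|k}\bigr)M_k=\sum_k M_k=I$. Finally, using the expansion $\Pi_l=\sum_{i'} E_{\Pi_l(i'),i'}$ from (\ref{brie}), the $(a,b)$-block of $\sum_l \Pi_l\otimes Q_l$ equals $\sum_{l:a=\Pi_l(b)} Q_l=\sum_l\delta_{a,\Pi_l(b)}Q_l=A_{ab}$, matching $\cA$ block by block. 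The main conceptual step is the first one; the rest is bookkeeping.
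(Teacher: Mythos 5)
Your proof is correct and follows essentially the same route as the paper's: use the symmetry of the post-processing to show that each fibre matrix $[\mu(j|\bR^{(i)},k)]_{ij}$ is doubly stochastic, apply the classical Birkhoff--von Neumann theorem for each fixed $k$, and recombine the resulting permutation weights with the mother measurement to build $Q_l=\sum_k q_{l|k}M_k$. The only difference is presentational (you verify the block identity via the expansion of $\Pi_l$ rather than by resumming the tensor expression), so there is nothing to add.
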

\begin{proof}
Let ${\bM}$ be a mother measurement from which the symmetric post-processing map $\mu$ yields
\begin{equation}\label{pecorino}
A_{ij} = R^{(i)}_j = \sum_k \mu(j|\bR^{(i)},k)M_k = \sum_k \mu(i|\bC^{(j)},k)M_k = C^{(j)}_i.
\end{equation}
Notice that any asymmetric $\mu$ satisfies the relation above, but the assumed symmetry says that the two sums in (\ref{pecorino}) match term by term.
This ensures that
\beq
\sum_i\mu(j|\bR^{(i)},k) = \sum_i\mu(i|\bC^{(j)},k) = 1,\ \forall j.
\eeq
Also, by definition, for all $k$ we have
\beq
\sum_j\mu(j|\bR^{(i)},k) = 1,\ \forall i.
\eeq
Hence, for each $k$ we see that the matrix $(\mu(j|\bR^{(i)},k))_{ij} = \sum_{ij}\mu(j|\bR^{(i)},k)E_{ij}$ is doubly stochastic, which according to Birkhoff-von Neumann's theorem has a decomposition into permutation matrices,
\begin{equation}\label{fetuccine}
\sum_{i,j=1}^n \mu(j|\bR^{(i)},k)E_{ij} = \sum_l r^{(k)}_l\Pi_l,
\end{equation}
%\end{subequations}
where $\textbf{r}^{(k)}=(r^{(k)}_l)_l\in\mathbb{S}_{n!}$ is a probability vector, for any $k$.

Therefore, using (\ref{pecorino}) and (\ref{fetuccine}) we obtain
\begin{subequations}\begin{align}
\cA =& \sum_{ij} E_{ij}\otimes A_{ij}\\
=& \sum_{ij} E_{ij}\otimes \left( \sum_k \mu(j|\bR^{(i)},k) {M}_k \right) \\
=& \sum_k \left( \sum_{ij} \mu(j|\bR^{(i)},k) E_{ij} \right) \otimes {M}_k \\
=& \sum_k \left( \sum_l r^{(k)}_l\Pi_l \right) \otimes {M}_k \\
=& \sum_l \Pi_l \otimes \left( \sum_k r^{(k)}_l {M}_k \right).
\end{align}\end{subequations}
Thus, defining $Q_l :=\sum_k r^{(k)}_l {M}_k$ we see that $Q_l \geq 0$ and $\sum_l Q_l = I$.
Therefore, $\bQ=(Q_l)$ is the coefficient-measurement that concludes the proof.
\end{proof}

Notice that if the response functions $\mu(j|\bR^{(i)},k)$ are deterministic, say, probability measures whose
support contains a unique point
(as the ones in the proof of Prop. \ref{capeletti}), then the left-hand side of (\ref{fetuccine}) is already a permutation matrix and the coefficient measurement $\bQ$ equals the mother measurement $\bM$.

Putting together Propositions \ref{capeletti} and \ref{mozzarella} we obtain the following characterisation.

\begin{theorem}\label{conchiglie}
A doubly normalised tensor of positive semi-definite operators admits a decomposition into permutation tensors if and only if its columns and rows are jointly measurable and admit symmetric post-processing functions.
\end{theorem}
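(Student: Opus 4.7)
The statement is an immediate combination of the two preceding propositions, so my plan is essentially to assemble them.

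First, I would handle the forward direction. Assume $\cA$ admits a decomposition $\cA = \sum_l \Pi_l \otimes Q_l$ into permutation tensors. Then Proposition \ref{capeletti} gives exactly the conclusion: the union $\cR \cup \cC$ of row- and column-measurements of $\cA$ is jointly measurable, with $\bQ$ itself serving as a mother measurement, and the deterministic post-processing functions $\mu(j|\bR^{(i)},l) = \delta_{i,\Pi_l(j)} = \mu(i|\bC^{(j)},l)$ are manifestly symmetric in the sense required by the theorem. So this direction reduces to citing Proposition \ref{capeletti} and noting the symmetry explicitly.

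For the converse direction, I would assume that $\cR \cup \cC$ is jointly measurable, via some mother measurement $\bM$, with post-processing functions that are symmetric under the exchange $(i,j) \leftrightarrow (j,i)$ between a row label and a column label. This is precisely the hypothesis of Proposition \ref{mozzarella}, whose conclusion is the existence of a coefficient-measurement $\bQ$ giving $\cA = \sum_l \Pi_l \otimes Q_l$, i.e. a decomposition into permutation tensors. So this direction also reduces to a direct invocation.

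Since all the technical work, including the use of the classical Birkhoff-von Neumann theorem on each fixed-$k$ slice $(\mu(j|\bR^{(i)},k))_{ij}$ to produce the permutation decomposition, has already been carried out in the proofs of Propositions \ref{capeletti} and \ref{mozzarella}, there is no further obstacle. The only thing I need to be careful about is to state both implications cleanly and confirm that the symmetry condition is used (and obtained) with the same convention in both propositions, so that one is genuinely the converse of the other. Consequently my written proof will be a short two-sentence argument of the form: ``$(\Rightarrow)$ follows from Proposition \ref{capeletti}; $(\Leftarrow)$ follows from Proposition \ref{mozzarella}.''
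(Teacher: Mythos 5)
Your proposal is correct and matches the paper exactly: the paper states Theorem \ref{conchiglie} as an immediate consequence of Propositions \ref{capeletti} and \ref{mozzarella}, with the forward direction relying on the observation (made right after the proof of Proposition \ref{capeletti}) that the deterministic post-processings $\mu(j|\bR^{(i)},l)=\delta_{i,\Pi_l(j)}=\mu(i|\bC^{(j)},l)$ are symmetric. Your care in checking that the symmetry convention is the same in both propositions is precisely the one nontrivial point to verify, and it holds.
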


A simpler hypothesis that can replace conditions $(i)-(ii)$ in Proposition \ref{mozzarella} refers to the linear independence of the operators of the mother measurement.
This allows us to assume joint measurability of only the rows (or, equivalently, of only the columns) of the DNT, relaxing $(i)$ but further specifying $(ii)$.

\begin{coro}
Let $\cA=[A_{ij}]$ be an $n\times n$ DNT satisfying
\begin{itemize}
\item[(i')] the set of rows $\cR = \{\bR^{(i)}=(A_{ij})_j\}_i$ of $\cA$ is jointly measurable; and
\item[(ii')] $\cR$ admits a mother measurement with linearly independent elements.
\end{itemize}
Then there exists a coefficient measurement $\bQ\in\PP_{n!}$ such that
\begin{equation}
\cA = \sum_l \Pi_l\otimes Q_l.
\end{equation}
\end{coro}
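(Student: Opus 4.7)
The plan is to reduce the corollary to Proposition~\ref{mozzarella} by showing that its hypotheses $(i')$--$(ii')$ force the column-measurements to also be jointly measurable from the same mother, and with a symmetric post-processing.

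First I would pick the mother measurement $\bM = (M_k)$ guaranteed by $(i')$--$(ii')$, together with its associated post-processing function $\mu(j|\bR^{(i)}, k)$, so that $A_{ij} = \sum_k \mu(j|\bR^{(i)}, k) M_k$. Define a candidate post-processing for the columns by the symmetric rule $\nu(i|\bC^{(j)}, k) := \mu(j|\bR^{(i)}, k)$. Non-negativity of $\nu$ and the identity $C^{(j)}_i = \sum_k \nu(i|\bC^{(j)}, k) M_k$ are immediate, so the only thing left to verify is the normalisation $\sum_i \nu(i|\bC^{(j)}, k) = 1$ for all $j, k$.

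The key step, where the linear independence hypothesis enters, is this last normalisation. Since $\cA$ is a DNT, the column sum gives $I = \sum_i A_{ij} = \sum_k \bigl( \sum_i \mu(j|\bR^{(i)}, k) \bigr) M_k$, while $\bM$ itself being a POVM yields $I = \sum_k 1\cdot M_k$. Linear independence of $\{M_k\}$ then forces the coefficients to match term by term, giving $\sum_i \mu(j|\bR^{(i)}, k) = 1$, which is exactly the normalisation of $\nu$. This is the main (and really the only) non-trivial step; everything else is bookkeeping.

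With the normalisation established, $\bM$ is a common mother measurement for $\cR \cup \cC$ and the post-processing is symmetric by construction, so hypotheses $(i)$ and $(ii)$ of Proposition~\ref{mozzarella} hold and the desired decomposition $\cA = \sum_l \Pi_l \otimes Q_l$ follows. The potential subtlety I would watch for is whether linear independence of the $M_k$ is really needed only to pin down a single post-processing (which is where the argument above uses it); without it, the column-induced post-processing could fail to sum to one and we would lose the symmetric structure required by Proposition~\ref{mozzarella}.
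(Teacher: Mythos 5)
Your proposal is correct and follows essentially the same route as the paper's own proof: you extend the row post-processing to the columns by the symmetric rule, and use linear independence of the mother's elements together with $\sum_i A_{ij} = I = \sum_k M_k$ to force the normalisation $\sum_i \mu(j|\bR^{(i)},k) = 1$, thereby verifying hypotheses $(i)$--$(ii)$ of Proposition~\ref{mozzarella}. No gaps.
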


\begin{proof}
We will show that $(i')-(ii')$ imply $(i)-(ii)$ of Prop. \ref{mozzarella}.
Assume that for each $i,j$ we have
\begin{equation}
A_{ij} = \sum_k\mu(j|\bR^{(i)},k)M_k,
\end{equation}
for some mother measurement $\bM=(M_k)$ having linearly independent elements.
Since the columns of $\cA$ also sum to the identity, we have
\begin{equation}
I = \sum_i A_{ij} = \sum_k \left(\sum_i \mu(j|\bR^{(i)},k)\right)M_k.
\end{equation}
Since $\sum_k M_k =I$ by the normalisation of quantum measurements, linear independence implies that
\begin{equation}
\sum_i \mu(j|\bR^{(i)},l) = 1.
\end{equation}
Thus, the extended post-processing map $\mu(i|\bC^{(j)},k) := \mu(j|\bR^{(i)},k)$ (initially conditioned only on the rows) is well defined and yields the column-measurements of $\cA$, which therefore are jointly measurable with $\cR$.
Then $\bM$ and $\mu$ satisfy conditions $(i)-(ii)$ and we can apply Prop. \ref{mozzarella}.
\end{proof}

\section{Relaxing joint measurability}

In spite of characterising the set of DNTs that are decomposable into permutation tensors, Theorem \ref{conchiglie} does 
not describe general DNTs.
For that, we would need to show that all DNTs have jointly measurable row-measurements, besides the condition on the post-processing map.
Perhaps surprisingly, the next example shows that this is not the case, and not even the strong correlations between the rows of a DNT (given by the normalisation of its columns) are sufficient to enforce joint measurability.

\begin{example}
For $d=2$, consider the 3-outcome measurement $\bA = (A_1,A_2,A_3)$ whose elements are vertices of an equilateral triangle in the Bloch sphere representation,
\begin{subequations}
\begin{align}
A_1 &= \frac{I+\sigma_x}{3},\\ 
A_2 &= \frac{I-(\sigma_x-\sqrt{3}\sigma_z)/2}{3},\\ 
A_3 &= \frac{I-(\sigma_x+\sqrt{3}\sigma_z)/2}{3},
\end{align}
\end{subequations}
where $\sigma_x$ and $\sigma_z$ are Pauli matrices.
Writing $A_1' = \sigma_x/3$ and $A_1'' = I/3$, we have $A_1 = A_1'+A_1''$ and the following DNT,
\begin{align*}
 \cA=
 \begin{bmatrix}
 A_{1}'+A_1'' & A_2 & A_3\\
 A_2 & A_1'+A_3 & A_1''\\
 A_3 & A_1'' & A_1'+A_2
 \end{bmatrix},
 \end{align*}
where each entry is positive semi-definite.
However, one can check via semi-definite programming that these row-measurements are not jointly measurable \cite{wolf2009}, and use Proposition \ref{capeletti} to conclude that $\cA$ is not a combination of permutation tensors.

Nevertheless, we can still write
\begin{subequations}\begin{align}
\cA=&\begin{bmatrix}
1 & 0 & 0 \\ 0 & 1 & 0 \\ 0 & 0 & 1
\end{bmatrix}\otimes A_1'
+\begin{bmatrix}1&0&0\\0&0&1\\0&1&0\end{bmatrix} \otimes A_1''\\
+&\begin{bmatrix}0&1&0\\1&0&0\\0&0&1\end{bmatrix} \otimes A_2
+\begin{bmatrix}0&0&1\\0&1&0\\1&0&0\end{bmatrix} \otimes A_3,
\end{align}\end{subequations}
which is a combination of permutation matrices where not all coefficient-operators are positive semi-definite, given that $A_1'\ngeq0$.
\end{example}

Despite the fact that the rows of the DNT in the above example are not jointly measurable,
they still can be reconstructed by applying a post-processing map to the tuple of coefficients
$(A_1', A_1'', A_2, A_3)$, as the proof of Prop. \ref{capeletti} shows.
Since $A_1'\ngeq0$, this tuple is not a mother measurement, but it plays the same role as one.
Therefore, we will call it a \textit{pseudo-mother}.

Indeed, any set of measurements $\{\cB^{(1)},\ldots,\cB^{(n)}\}$ admits a  pseudo-mother like that;
if we no longer impose positive semi-definitiveness, we can simply consider the products $\widetilde{M}_{b_1\ldots b_n} = B^{(1)}_{b_1}\dots B^{(n)}_{b_n}$ and check that $\mu(j|\bB^{(i)},b_1\ldots b_n)=\delta_{b_i,j}$ is an appropriate post-processing map for it, since it satisfies
\begin{equation}
B^{(i)}_j = \sum_{b_1,\ldots,b_n}\widetilde{M}_{b_1\ldots b_n}\delta_{b_i,j}.
\end{equation}
Notice that $\widetilde{\bM}$ is still normalised, and many other such pseudo-mothers can be constructed.
For example, the order of the operators in its defining product can be arbitrary, as long as it is the same for each element of the pseudo-mother.

Hence, by dropping positive semi-definitiveness from the results in the last section, it is straightforward to obtain the following result.

\begin{theorem}\label{gorgonzolla}
Let $\cA=[A_{ij}]$ be an $n\times n$ DNT and $\cR = \{\bR^{(i)}=(A_{ij})_j\}_i,\ \cC = \{\bC^{(j)}=(A_{ij})_i\}_j$ its sets of row- and column-measurements.
Then the following are equivalent:
\begin{itemize}
\item[(a)] $\cR\cup\cC$ admits a pseudo-mother measurement $\widetilde\bM$ with a symmetric post-processing map, $\mu(j|\bR^{(i)},k) = \mu(i|\bC^{(j)},k)$ for all $i,j,k$;
\item[(b)] $\cA=\sum_l\Pi_l\otimes \widetilde Q_l$, where the operators $\widetilde Q_l$ are normalised, %$\sum_l \widetilde Q_l=I$, and satisfy $\sum_{l:i=\Pi_l(j)}Q_l \geq 0$, for all $i,j$, 
but are not necessarily positive semi-definite.
\end{itemize}
\end{theorem}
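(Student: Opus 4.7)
The plan is to observe that Theorem \ref{gorgonzolla} is essentially Theorem \ref{conchiglie} with positivity relaxed to mere normalisation, and that the proofs of Propositions \ref{capeletti} and \ref{mozzarella} go through almost verbatim once one identifies exactly where positive semi-definiteness was invoked. In those proofs, operator positivity of the mother measurement $\bM$ enters in a single place: it guarantees $Q_l=\sum_k r_l^{(k)} M_k \geq 0$ at the end of Prop.~\ref{mozzarella}. Dropping this conclusion, while keeping normalisation, is exactly what turns ``mother'' into ``pseudo-mother'' and the POVM $\bQ$ into a normalised but possibly non-positive tuple $(\widetilde Q_l)_l$.

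For the direction (b) $\Rightarrow$ (a), I would mirror the proof of Prop.~\ref{capeletti}. Given $\cA=\sum_l \Pi_l\otimes \widetilde Q_l$, the tuple $\widetilde\bM=(\widetilde Q_l)_l$ itself plays the role of a pseudo-mother for $\cR\cup\cC$: the manipulation (\ref{brie})--(\ref{rondelli}) never uses positivity, so $A_{ij}=\sum_{l:\, i=\Pi_l(j)} \widetilde Q_l$, and the deterministic post-processing $\mu(j|\bR^{(i)},l)=\delta_{i,\Pi_l(j)}=\mu(i|\bC^{(j)},l)$ recovers both the rows and the columns of $\cA$. This map is manifestly symmetric, and $\widetilde\bM$ is normalised because each row of $\cA$ sums to $I$.

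For the direction (a) $\Rightarrow$ (b), I would rerun the proof of Prop.~\ref{mozzarella} with $\widetilde\bM$ in place of $\bM$. Symmetry and normalisation of $\mu$ still imply that for each $k$ the matrix $(\mu(j|\bR^{(i)},k))_{ij}$ is a genuine doubly stochastic matrix (the entries of $\mu$ are ordinary non-negative numbers, which is the only positivity the argument actually uses), so Birkhoff--von Neumann yields the decomposition (\ref{fetuccine}) with $\textbf{r}^{(k)}\in\mathbb{S}_{n!}$. Substituting into $\cA=\sum_{ij}E_{ij}\otimes A_{ij}$ and rearranging exactly as in Prop.~\ref{mozzarella} gives $\cA=\sum_l \Pi_l\otimes \widetilde Q_l$ with $\widetilde Q_l:=\sum_k r_l^{(k)}\widetilde M_k$. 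Normalisation of these operators follows from $\sum_l \widetilde Q_l=\sum_k\bigl(\sum_l r_l^{(k)}\bigr)\widetilde M_k=\sum_k \widetilde M_k=I$, but positivity of $\widetilde Q_l$ can no longer be concluded, which is precisely the content of ``not necessarily positive semi-definite'' in (b). The only subtlety, rather than a genuine obstacle, is bookkeeping: one must verify that every identity used in the earlier proofs (normalisation of the pseudo-mother, double-stochasticity of the post-processing matrix, applicability of BvN) relies only on the \emph{numerical} positivity of $\mu$ and on the \emph{normalisation} of the pseudo-mother, never on operator positivity---which a careful rereading of Props.~\ref{capeletti} and~\ref{mozzarella} confirms.
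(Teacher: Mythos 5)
Your proposal is correct and follows exactly the route the paper intends: the paper states Theorem~\ref{gorgonzolla} as an immediate consequence of dropping positive semi-definiteness from Propositions~\ref{capeletti} and~\ref{mozzarella}, and you have correctly identified that operator positivity of the (pseudo-)mother enters only in the final conclusion $Q_l\geq 0$, while the double stochasticity of $(\mu(j|\bR^{(i)},k))_{ij}$ and the applicability of Birkhoff--von Neumann rest solely on the numerical non-negativity of $\mu$ and the normalisation of $\widetilde\bM$. Your write-up merely makes explicit what the paper leaves as ``straightforward.''
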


Although every set of measurements admits a pseudo-mother, we were not able to show that one with a symmetric post-processing can always be found, nor to present any counter-example to it.
Therefore, we leave it as an open question whether every DNT satisfies item $(a)$ of Theorem \ref{gorgonzolla}, and consequently can be characterised by the decomposition presented in item $(b)$.

\section{DNTs arising from a joint measurability problem}

Throughout this work, we presented our motivations to define DNTs and a generalisation of Birkhoff-von Neumann's theorem as purely mathematical, namely to further extend the clear parallel between probability vectors and POVMs.
We now show that the set of DNTs can be reached also by a quantum-theoretical path, more specifically in terms of joint measurability.

The property of joint measurability is based on the existence of a mother measurement, but another natural question refers to the uniqueness of such object \cite{guerini2018}.
Restricting the post-processing map to be deterministic, and therefore a marginalisation (which can be done without loss of generality \cite{ali2009}), we can display the mother measurement $\bM=(M_{ab})$ for a pair of POVMs $\bA=(A_1,\ldots,A_n), \bB=(B_1,\ldots,B_n)$ as a table,
\begin{eqnarray}
\begin{tabular}{ccc|c}
$M_{11}$ & $\cdots$ & $M_{1n}$ & $A_1$\\
$\vdots$ & $\ddots$ & $\vdots$ & $\vdots$\\
$M_{n1}$ & $\cdots$ & $M_{nn}$ & $A_n$\\
\hline
$B_1$ & $\cdots$ & $B_n$ &
\end{tabular},
\end{eqnarray}
emphasising the marginalisations $\sum_b{M_{ab}} = A_a$ and $\sum_a{M_{ab}}=B_b$.

Once we decide to study the plurality of mother measurements for a given pair, it is reasonable to start from the most basic specimen.
Taking $\bA=\bB=(I/n,\ldots,I/n)$, we guarantee that the pair is trivially joint measurable, for being both copies of the same POVM, and, on top of that, a trivial one.
However, this trivial case allows to see that the general mother measurement for this pair, upon rescaling all the operators by a factor of $n$ (the number of outcomes), yields a table
\begin{eqnarray}
\begin{tabular}{ccc|c}
$nM_{11}$ & $\cdots$ & $nM_{1n}$ & $I$\\
$\vdots$ & $\ddots$ & $\vdots$ & $\vdots$\\
$nM_{n1}$ & $\cdots$ & $nM_{nn}$ & $I$\\
\hline
$I$ & $\cdots$ & $I$ &
\end{tabular},
\end{eqnarray}
which has exactly the DNT features, \ie, it can be provided with a tensor structure, and its rows and columns form POVMs.

Thus, we see that a generalisation of BvN's theorem emerges not only from an operator-version of the original result, but also as the description of the set of mother measurements for perhaps the most trivial pair of POVMs that accepts multiple mothers.

\section{Discussion}

In this work we explored the analogy between probability vectors and POVMs and extrapolated it to investigate a generalised version of Birkhoff-von Neumann's theorem.
The richer structure of non-negative operators yields a discrepancy between the set of doubly normalised tensors (that generalise doubly stochastic matrices) and decompositions into permutation tensors (that correspond to convex combinations of permutations).
We showed that the latter can be perfectly described as the DNTs composed of jointly measurable POVMs with symmetric post-processing maps, remarking joint measurability -- a quantum-theoretical concept with a strongly operational motivation -- as an relevant mathematical property by itself.

On the other hand, the general set of DNTs remains to be characterised.
Our Theorem \ref{gorgonzolla} states that the DNTs arising from a symmetric post-processing can be written as affine combinations of permutation tensors; it is not clear whether such symmetry imposes a non-trivial constraint or, on the contrary, arbitrary DNTs satisfy this requirement.
This result is similar to a characterisation of unital quantum channels presented in Ref. \cite{mendl2009}, where it was proved that these objects are affine combinations of unitary channels.
The similarity is curious, given that unital quantum channels are also generalisations of doubly stochastic matrices in some sense.
In a broader context, the need for quasi-POVMs in our description of incompatible DNTs dialogues with the need for quasi-probability representations in quantum theory \cite{wigner1932,lutkenhaus1995}.

As mentioned along the text, in the literature on BvN's theorem is also posed the question of how many permutations are needed to describe an arbitrary doubly stochastic matrix.
It was shown that $(n-1)^2+1$ permutations were enough \cite{marcus1959}, and that although many different decompositions exist, the problem of computing the optimal one (with the minimal number of terms) is NP-complete \cite{dufosse2016}.
Here, Theorem \ref{conchiglie} establishes that any jointly measurable, symmetric DNT can be decomposed into permutations matrices, with coefficient operators that form a POVM.
Hence, an extremal DNT with these properties must correspond to an extremal coefficient POVM.
Extremal POVMs on dimension $d$ have at most $d^2$ non-null elements, thus by fixing the dimension of the underlying Hilbert space (which is independent of the size $n\times n$ of the tensor), we arrive at an uniform upper bound for the number of terms in the decomposition of extremal DNT's.

Finally, we recall that doubly stochastic matrices go together with the concept of majorisation of real vectors and Hermitian matrices \cite{marshall2011,watrous}, a celebrated connection with important applications to quantum information theory \cite{nielsen1999}.
It would be interesting to understand whether DNTs are associated to a similar notion of majorisation for POVMs, together with its consequences for joint measurability.

\section*{Acknowledgements}

The authors are thankful T. Perche for fruitful discussions.
LG is supported by the S\~ao Paulo Research Foundation (FAPESP) under grants 2016/01343-7 and 2018/04208-9. AB is partially supported
by CNPq.

\end{document}